\newtheorem{definition}{Definition}
\newtheorem{lemma}{Lemma}
\newtheorem{proposition}{Proposition}
\newtheorem{theorem}{Theorem}
\newtheorem{remark}{Remark}
\newtheorem{example}{Example}
\newcommand{\naturals}{\ensuremath{\mathbb{N}}}
\newcommand{\reals}{\ensuremath{\mathbb{R}}}
\newcommand{\pr}{\ensuremath{\mathbb{P}}}
\newcommand{\expectation}{\ensuremath{\mathbb{E}}}
\newcommand{\LL}{\ensuremath{\mathbb{L}}}
\begin{document}

\title{\huge{On Concentration and Revisited Large Deviations
Analysis of Binary Hypothesis Testing}}

\author{\IEEEauthorblockN{Igal Sason\\
\hspace*{-0.4cm} \url{sason@ee.technion.ac.il}\\
Department of Electrical Engineering\\
Technion -- Israel Institute of Technology\\
Haifa 32000, Israel}}

\maketitle

\begin{abstract}
This paper first introduces a refined version of the
Azuma-Hoeffding inequality for discrete-parameter martingales with
uniformly bounded jumps. The refined inequality is used to revisit
the large deviations analysis of binary hypothesis testing.
\end{abstract}

\begin{keywords}
Fisher information, hypothesis testing, large deviations, relative
entropy.
\end{keywords}

\section{Introduction}
\label{section: introduction}

An analysis of binary hypothesis testing from an
information-theoretic point of view, and a derivation of its
related error exponents in analogy to optimum channel codes was
provided in \cite{Blahut_IT74}. A nice exposition of the subject
is also provided in \cite[Chapter~11]{Cover and Thomas} where the
exact error exponents for the large deviation analysis of binary
hypothesis testing are provided in terms of relative entropies.

The Azuma-Hoeffding inequality is by now a well-known methodology
that has been often used to prove concentration of measure
phenomena. It is due to Hoeffding \cite{Hoeffding} who proved it
first for a sum of independent and bounded RVs, and Azuma
\cite{Azuma} who later extended it to bounded-difference
martingales. For a nice exposition of the martingale approach,
used for establishing concentration inequalities, the reader is
referred to e.g. \cite{survey2006} and \cite{McDiarmid_tutorial}.
The starting point of this work is an introduction of a known
concentration inequality for discrete-parameter martingales with
uniformly bounded jumps, which forms a refined version of the
Azuma-Hoeffding inequality. It is then used to study some of its
information-theoretic implications in the context of binary
hypothesis testing. Specifically, the tightness of this
concentration inequality is studied via a large deviations
analysis for binary hypothesis testing, and the demonstration of
its improved tightness over the Azuma-Hoeffding inequality is
revisited in this context. Some links of the derived lower bounds
on the error exponents to some information measures (e.g., the
relative entropy and Fisher information) are obtained along the
way.

This paper is structured as follows: Section~\ref{section:
Concentration Inequalities via Martingales} introduces briefly
some preliminary material related to martingales and Azuma's
inequality, and then it considers a refined version of Azuma's
inequality. This refined inequality is followed by a study of some
of its relation to the martingale central limit theorem.
Section~\ref{section: binary hypothesis testing} considers the
relation of the Azuma's inequality and the refined version of this
inequality (which was introduced in Section~\ref{section:
Concentration Inequalities via Martingales}) to large, moderate
and small deviations analysis of binary hypothesis testing.
Section~\ref{section: summary} concludes the paper, followed by
some proofs and complementary details that are relegated to the
appendices.

\section{Preliminaries and a New Concentration Inequality}
\label{section: Concentration Inequalities via Martingales}

In the following, we present briefly essential background on the
martingale approach that is used in this paper to derive
concentration inequalities. A refined version of Azuma's
inequality is then introduced. This concentration inequality is
applied in the next section for revising the large deviations
analysis of binary hypothesis testing.

\subsection{Doob's Martingales} \label{subsection: Martingales}
This sub-section provides a short background on martingales to set
definitions and notation. For a more thorough study of
martingales, the reader it referred to, e.g., \cite{Billingsley}.

\begin{definition}{\bf[Doob's Martingale]} Let $(\Omega, \mathcal{F},
\pr)$ be a probability space. A Doob's martingale sequence is a
sequence $X_0, X_1, \ldots$ of random variables (RVs) and
corresponding sub $\sigma$-algebras $\mathcal{F}_0, \mathcal{F}_1,
\ldots$ (also denoted by $\{X_i, \mathcal{F}_i\}$) that satisfy
the following conditions:
\begin{enumerate}
\item $X_i \in \LL^1(\Omega, \mathcal{F}_i, \pr)$ for every $i$, i.e., each $X_i$
is defined on the same sample space $\Omega$, it is measurable
with respect to the corresponding $\sigma$-algebra $\mathcal{F}_i$
(i.e., $X_i$ is $\mathcal{F}_i$-measurable) and $\expectation
[|X_i|] = \int_{\Omega} |X_i(\omega)| d\pr(\omega) < \infty.$
\item $\mathcal{F}_0 \subseteq \mathcal{F}_1 \subseteq \ldots $ (where this sequence
of $\sigma$-algebras is called a filtration).
\item $X_i = \expectation[ X_{i+1} | \mathcal{F}_i]$
holds almost surely (a.s.) for every $i$.
\end{enumerate}
\label{definition: Doob's martingales}
\end{definition}

For preliminary material on the construction of discrete-time
martingales, see Appendix~\ref{Appendix: remarks on Doob's
martingales} (which is relevant to the analysis in
Section~\ref{section: binary hypothesis testing}).

\subsection{Azuma's Inequality} \label{subsection:
Azuma's inequality} Azuma's inequality\footnote{Azuma's inequality
is also known as the Azuma-Hoeffding inequality. Since this
inequality is referred several times in this paper, it will be
named from this point as Azuma's inequality for the sake of
brevity.} forms a useful concentration inequality for
bounded-difference martingales \cite{Azuma}. In the following,
this inequality is introduced. The reader is referred to, e.g.,
\cite[Chapter~11]{AlonS_tpm3}, \cite{survey2006} and
\cite{McDiarmid_tutorial} for surveys on concentration
inequalities for (sub/ super) martingales.

\begin{theorem}{\bf[Azuma's inequality]}
Let $\{X_k, \mathcal{F}_k\}_{k=0}^{\infty}$ be a
discrete-parameter real-valued martingale sequence such that for
every $k \in \naturals$, the condition $ |X_k - X_{k-1}| \leq d_k$
holds a.s. for some non-negative constants
$\{d_k\}_{k=1}^{\infty}$. Then
\begin{equation}
\pr( | X_n - X_0 | \geq r) \leq 2 \exp\left(-\frac{r^2}{2
\sum_{k=1}^n d_k^2}\right) \, \quad \forall \, r \geq 0.
\label{eq: Azuma's concentration inequality - general case}
\end{equation}
\label{theorem: Azuma's concentration inequality}
\end{theorem}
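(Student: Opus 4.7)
The plan is to follow the standard Chernoff-bound approach for martingales, which I expect to be essentially the original argument of Hoeffding and Azuma. The main object is the martingale difference sequence $\xi_k := X_k - X_{k-1}$, which satisfies $\expectation[\xi_k \mid \mathcal{F}_{k-1}] = 0$ a.s.\ by the martingale property, and $|\xi_k| \leq d_k$ a.s.\ by hypothesis. It then suffices to bound $\pr(X_n - X_0 \geq r)$ by $\exp\bigl(-r^2 / (2 \sum_k d_k^2)\bigr)$; applying the same bound to the martingale $\{-X_k, \mathcal{F}_k\}$ and using a union bound produces the factor of $2$ in the two-sided statement.

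For the one-sided bound I would first use the exponential Markov (Chernoff) inequality: for any $t > 0$,
\begin{equation*}
\pr(X_n - X_0 \geq r) \leq e^{-tr} \, \expectation\bigl[e^{t(X_n - X_0)}\bigr].
\end{equation*}
Writing $X_n - X_0 = \sum_{k=1}^n \xi_k$, I would peel off the last term by conditioning on $\mathcal{F}_{n-1}$: since $e^{t \sum_{k<n} \xi_k}$ is $\mathcal{F}_{n-1}$-measurable, the tower property gives
\begin{equation*}
\expectation\bigl[e^{t(X_n - X_0)}\bigr] = \expectation\Bigl[e^{t \sum_{k=1}^{n-1} \xi_k} \cdot \expectation\bigl[e^{t \xi_n} \mid \mathcal{F}_{n-1}\bigr]\Bigr].
\end{equation*}

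The key lemma, and the main technical step, is to show that $\expectation[e^{t\xi_k} \mid \mathcal{F}_{k-1}] \leq e^{t^2 d_k^2 / 2}$ a.s. This is Hoeffding's lemma applied conditionally: any random variable $\xi$ with conditional mean zero and $|\xi| \leq d$ a.s.\ can be written, by convexity of $x \mapsto e^{tx}$ on $[-d, d]$, as $e^{t\xi} \leq \frac{d-\xi}{2d} e^{-td} + \frac{d+\xi}{2d} e^{td}$; taking conditional expectation kills the linear term in $\xi$ and leaves $\cosh(td)$, which is dominated by $e^{t^2 d^2 / 2}$ (via the standard Taylor comparison $\cosh(x) \leq e^{x^2/2}$). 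I expect this convexity/Taylor estimate to be the only step that is not purely bookkeeping.

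Iterating the peeling step $n$ times yields $\expectation[e^{t(X_n - X_0)}] \leq \exp\bigl(t^2 \sum_{k=1}^n d_k^2 / 2\bigr)$, so the Chernoff bound becomes
\begin{equation*}
\pr(X_n - X_0 \geq r) \leq \exp\Bigl(-tr + \tfrac{t^2}{2} \textstyle\sum_{k=1}^n d_k^2\Bigr).
\end{equation*}
Optimizing over $t > 0$ (the optimum is $t^\star = r / \sum_{k=1}^n d_k^2$) gives the one-sided bound $\exp\bigl(-r^2 / (2 \sum_{k=1}^n d_k^2)\bigr)$, and the two-sided statement~\eqref{eq: Azuma's concentration inequality - general case} follows, as noted, by symmetry and a union bound.
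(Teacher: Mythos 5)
Your proposal is correct: the Chernoff bound combined with the conditional Hoeffding lemma (convexity on $[-d_k,d_k]$, the $\cosh(td_k)\leq e^{t^2 d_k^2/2}$ estimate, peeling by the tower property, then optimizing $t$) is a complete and standard proof of Theorem~\ref{theorem: Azuma's concentration inequality}. The paper itself does not reproduce a proof but defers to the cited references, and the argument given there is essentially the one you describe, so there is nothing to reconcile.
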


The concentration inequality stated in Theorem~\ref{theorem:
Azuma's concentration inequality} was proved in \cite{Hoeffding}
for independent bounded random variables, followed by a discussion
on sums of dependent random variables; this inequality was later
derived in \cite{Azuma} for bounded-difference martingales. For a
proof of Theorem~\ref{theorem: Azuma's concentration inequality}
see, e.g., \cite{survey2006} and
\cite[Chapter~2.4]{Dembo_Zeitouni}.

\subsection{A Refined Version of Azuma's Inequality}

\begin{theorem}
Let $\{X_k, \mathcal{F}_k\}_{k=0}^{\infty}$ be a
discrete-parameter real-valued martingale. Assume that, for some
constants $d, \sigma > 0$, the following two requirements are
satisfied a.s.
\begin{eqnarray*}
&& | X_k - X_{k-1} | \leq d, \\
&& \text{Var} (X_k | \mathcal{F}_{k-1}) = \expectation \bigl[(X_k
- X_{k-1})^2 \, | \, \mathcal{F}_{k-1} \bigr] \leq \sigma^2
\end{eqnarray*}
for every $k \in \{1, \ldots, n\}$. Then, for every $\alpha \geq
0$,
\begin{equation}
\hspace*{-0.2cm} \pr(|X_n-X_0| \geq \alpha n) \leq 2 \exp\left(-n
\, D\biggl(\frac{\delta+\gamma}{1+\gamma} \Big|\Big|
\frac{\gamma}{1+\gamma}\biggr) \right) \label{eq: first refined
concentration inequality}
\end{equation}
where
\begin{equation}
\gamma \triangleq \frac{\sigma^2}{d^2}, \quad \delta \triangleq
\frac{\alpha}{d}  \label{eq: notation}
\end{equation}
and
\begin{equation}
D(p || q) \triangleq p \ln\Bigl(\frac{p}{q}\Bigr) + (1-p)
\ln\Bigl(\frac{1-p}{1-q}\Bigr), \quad \forall \, p, q \in [0,1]
\label{eq: divergence}
\end{equation}
is the divergence (a.k.a. relative entropy or Kullback-Leibler
distance) between the two probability distributions $(p,1-p)$ and
$(q,1-q)$. If $\delta>1$, then the probability on the left-hand
side of \eqref{eq: first refined concentration inequality} is
equal to zero. \label{theorem: first refined concentration
inequality}
\end{theorem}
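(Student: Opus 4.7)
The plan is a Chernoff-type exponential-moment argument on the martingale differences $\xi_k = X_k - X_{k-1}$, sharpened by using the conditional variance hypothesis in addition to the uniform jump bound. By the martingale property $\expectation[\xi_k\mid\mathcal{F}_{k-1}]=0$, and the hypotheses give $|\xi_k|\le d$ and $\expectation[\xi_k^2\mid\mathcal{F}_{k-1}]\le\sigma^2$ a.s. For every $t>0$, Markov's inequality yields $\pr(X_n-X_0\ge \alpha n)\le e^{-t\alpha n}\,\expectation\bigl[e^{t(X_n-X_0)}\bigr]$, and a standard tower-property iteration reduces everything to a pointwise upper bound on $\expectation[e^{t\xi_k}\mid\mathcal{F}_{k-1}]$.

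The heart of the argument is the following one-step lemma, which I would prove separately: if $\xi$ is real-valued with $|\xi|\le d$, $\expectation[\xi]=0$, $\expectation[\xi^2]\le\sigma^2$, and $\gamma=\sigma^2/d^2\in(0,1]$, then for every $t\ge 0$
\[
\expectation\bigl[e^{t\xi}\bigr]\;\le\;\frac{1}{1+\gamma}\,e^{-\gamma t d}\,+\,\frac{\gamma}{1+\gamma}\,e^{td}.
\]
The right-hand side is exactly $\expectation[e^{tY}]$ for the two-point random variable $Y$ placing mass $\gamma/(1+\gamma)$ at $+d$ and $1/(1+\gamma)$ at $-\sigma^2/d$, which has mean $0$ and variance exactly $\sigma^2$; moment-theoretic intuition suggests $Y$ is the worst case. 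The cleanest route is to exhibit a quadratic majorant $q(x)=A+Bx+Cx^2$ with $C\ge 0$ satisfying $q(x)\ge e^{tx}$ on $[-d,d]$ and equality at $x=d$ and $x=-\sigma^2/d$; taking expectation and using $\expectation[\xi]=0$ together with $\expectation[\xi^2]\le\sigma^2$ collapses $\expectation[q(\xi)]$ into the desired right-hand side, with a residual $C(\sigma^2-\expectation[\xi^2])\ge 0$ that only helps. Iterating the lemma conditionally gives $\expectation\bigl[e^{t(X_n-X_0)}\bigr]\le\bigl((e^{-\gamma t d}+\gamma e^{td})/(1+\gamma)\bigr)^n$.

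Plugging this into the Chernoff bound and writing $u=td$, the bound becomes $\exp\bigl(n[-\delta u+\ln((e^{-\gamma u}+\gamma e^u)/(1+\gamma))]\bigr)$. Minimizing in $u\ge 0$ is a short calculus exercise: the first-order condition yields $e^{(1+\gamma)u^{\star}}=(\gamma+\delta)/[\gamma(1-\delta)]$, valid for $0\le\delta<1$, and back-substitution together with the algebraic identity relating $\ln((\gamma+\delta)/\gamma)$, $\ln(1-\delta)$, and the fraction $p=(\delta+\gamma)/(1+\gamma)$ simplifies the minimum to $-D(p\,||\,\gamma/(1+\gamma))$. Applying the same argument to the martingale $\{-X_k,\mathcal{F}_k\}$ bounds $\pr(X_0-X_n\ge\alpha n)$ with the same exponent, and a union bound supplies the factor $2$. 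The case $\delta>1$ is immediate since $|X_n-X_0|\le nd<\alpha n$ a.s., making the left-hand side zero.

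The main obstacle is the one-step lemma. The naive chord trick $x=\lambda d+(1-\lambda)(-\sigma^2/d)$ only represents $x$ in $[-\sigma^2/d,\,d]$, but $\xi$ can take values all the way down to $-d$, so one cannot simply use convexity of $e^{tx}$ on this representation. I would therefore either construct the quadratic majorant explicitly (the two interpolation conditions at $x=d$ and $x=-\sigma^2/d$ leave one free parameter, which I would fix by demanding that $q-e^{tx}$ have a double zero in $[-d,-\sigma^2/d]$, guaranteeing $q\ge e^{tx}$ on the whole interval), or invoke an LP-duality / Chebyshev-system argument on the moment problem $\max\{\expectation[e^{t\xi}]:\xi\in[-d,d],\,\expectation[\xi]=0,\,\expectation[\xi^2]\le\sigma^2\}$, whose unique optimizer for $t>0$ is exactly the two-point distribution above. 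This is the genuinely non-trivial step; everything else is routine.
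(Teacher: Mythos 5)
Your proposal is correct and follows essentially the same route as the paper's proof (which is deferred to the reference and is the Chernoff-type argument of Dembo--Zeitouni, Corollary~2.4.7): bound the conditional moment-generating function of each martingale difference by that of the two-point distribution on $\{-\sigma^2/d,\,d\}$ (this one-step bound is the classical Bennett lemma, provable exactly by your tangential quadratic-majorant or moment-problem argument, with the double contact occurring at $x=-\sigma^2/d$ itself), iterate via the tower property, and optimize the exponent $u=td$ to obtain the stated divergence. No genuine gaps beyond routine details.
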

\begin{proof}
The idea of the proof of Theorem~\ref{theorem: first refined
concentration inequality} is essentially similar to the proof of
\cite[Corollary~2.4.7]{Dembo_Zeitouni}. The full proof is provided
in \cite[Section~III]{Sason_submitted_paper}.
\end{proof}

\begin{proposition}
Let $\{X_k, \mathcal{F}_k\}_{k=0}^{\infty}$ be a
discrete-parameter real-valued martingale. Then, for every $\alpha
\geq 0$,
\begin{equation}
\pr(|X_n-X_0| \geq \alpha \sqrt{n}) \leq 2
\exp\Bigl(-\frac{\delta^2}{2\gamma}\Bigr) \Bigl(1+
O\bigl(n^{-\frac{1}{2}}\bigr)\Bigr). \label{eq: concentration1}
\end{equation}
\label{proposition: a similar scaling of the concentration
inequalities}
\end{proposition}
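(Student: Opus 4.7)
The plan is to reduce \eqref{eq: concentration1} to Theorem~\ref{theorem: first refined concentration inequality} by rescaling the deviation parameter. Concretely, apply Theorem~\ref{theorem: first refined concentration inequality} with $\alpha$ there replaced by $\alpha/\sqrt{n}$, so that the left-hand side becomes $\pr(|X_n - X_0| \geq \alpha\sqrt{n})$ as desired. The bounded-difference and conditional-variance hypotheses are inherited from the context, and the parameter $\delta = \alpha/d$ from \eqref{eq: notation} is correspondingly replaced by $\delta/\sqrt{n}$, while $\gamma = \sigma^2/d^2$ is unchanged.

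With $q \triangleq \gamma/(1+\gamma)$ and $p_n \triangleq (\delta/\sqrt{n} + \gamma)/(1+\gamma)$, one has $p_n - q = \delta / \bigl(\sqrt{n}\,(1+\gamma)\bigr)$, which vanishes as $n \to \infty$. Since $\gamma > 0$, the point $q$ lies strictly inside $(0,1)$, so $p \mapsto D(p\|q)$ is smooth in a neighborhood of $q$ with $D(q\|q)=0$ and zero first derivative there. The standard Taylor expansion then yields
\begin{equation*}
D(p_n\|q) = \frac{(p_n-q)^2}{2\,q(1-q)} + O\bigl((p_n-q)^3\bigr) = \frac{\delta^2}{2\gamma \, n} + O\bigl(n^{-3/2}\bigr),
\end{equation*}
where the simplification uses $q(1-q) = \gamma/(1+\gamma)^2$. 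Multiplying by $n$ gives $n \, D(p_n\|q) = \delta^2/(2\gamma) + O(n^{-1/2})$.

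Exponentiating and using $\exp\bigl(O(n^{-1/2})\bigr) = 1 + O(n^{-1/2})$ produces exactly the bound \eqref{eq: concentration1}. The only point requiring care is uniformity of the Taylor remainder: for $n$ large enough, $p_n$ lies in a fixed compact neighborhood of $q$ on which the third derivative of $D(\cdot\|q)$ is bounded, so the cubic error term is genuinely $O(n^{-3/2})$ with a constant depending only on $\delta$ and $\gamma$. Beyond this routine check, the argument is essentially a one-line substitution followed by a Taylor expansion of the binary divergence, and I do not foresee any further obstacle.
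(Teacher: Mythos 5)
Your proposal is correct and follows essentially the same route as the paper: the paper's proof simply invokes Theorem~\ref{theorem: first refined concentration inequality} with the deviation rescaled to $\alpha\sqrt{n}$ (deferring details to Appendix~G of the cited arXiv manuscript), which amounts to exactly your substitution $\delta \mapsto \delta/\sqrt{n}$ followed by a second-order Taylor expansion of the binary divergence around $\gamma/(1+\gamma)$. Your handling of the remainder term and the step $\exp\bigl(O(n^{-1/2})\bigr) = 1 + O(n^{-1/2})$ is the same bookkeeping carried out there, so no gap remains.
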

\begin{proof}
This inequality follows from Theorem~\ref{theorem: first refined
concentration inequality} (see
\cite[Appendix~G]{Sason_submitted_paper}).
\end{proof}

\section{Binary Hypothesis Testing}
\label{section: binary hypothesis testing}

Binary hypothesis testing for finite alphabet models was analyzed
via the method of types, e.g., in \cite[Chapter~11]{Cover and
Thomas} and \cite{Csiszar_Shields_FnT}. It is assumed that the
data sequence is of a fixed length $(n)$, and one wishes to make
the optimal decision (based on the Neyman-Pearson ratio test)
based on the received sequence.

Let the RVs $X_1, X_2 ....$ be i.i.d. $\sim Q$, and consider two
hypotheses:
\begin{itemize}
\item $H_1:  Q = P_1$.
\item $H_2:  Q = P_2$.
\end{itemize}
For the simplicity of the analysis, let us assume that the RVs are
discrete, and take their values on a finite alphabet $\mathcal{X}$
where $P_1(x), P_2(x) > 0$ for every $x \in \mathcal{X}$.

In the following, let
\begin{equation*}
L(X_1, \ldots, X_n) \triangleq \ln \frac{P_1^n(X_1, \ldots,
X_n)}{P_2^n(X_1, \ldots, X_n)} = \sum_{i=1}^n \ln
\frac{P_1(X_i)}{P_2(X_i)}
\end{equation*}
designate the log-likelihood ratio. By the strong law of large
numbers (SLLN), if hypothesis $H_1$ is true, then a.s.
\begin{equation}
\lim_{n \rightarrow \infty} \frac{L(X_1, \ldots, X_n)}{n} = D(P_1
|| P_2) \label{eq: a.s. limit of the normalized LLR under
hypothesis H1}
\end{equation}
and otherwise, if hypothesis $H_2$ is true, then a.s.
\begin{equation}
\lim_{n \rightarrow \infty} \frac{L(X_1, \ldots, X_n)}{n} = -D(P_2
|| P_1) \label{eq: a.s. limit of the normalized LLR under
hypothesis H2}
\end{equation}
where the above assumptions on the probability mass functions
$P_1$ and $P_2$ imply that the relative entropies, $D(P_1 || P_2)$
and $D(P_2 || P_1)$, are both finite. Consider the case where for
some fixed constants $\overline{\lambda}, \underline{\lambda} \in
\reals$ where $$-D(P_2||P_1) < \underline{\lambda} \leq
\overline{\lambda} < D(P_1||P_2)$$ one decides on hypothesis $H_1$
if $$ L(X_1, \ldots, X_n) > n \overline{\lambda} $$ and on
hypothesis $H_2$ if $$ L(X_1, \ldots, X_n) < n
\underline{\lambda}.$$ Note that if $\overline{\lambda} =
\underline{\lambda} \triangleq \lambda$ then a decision on the two
hypotheses is based on comparing the normalized log-likelihood
ratio (w.r.t. $n$) to a single threshold $(\lambda)$, and deciding
on hypothesis $H_1$ or $H_2$ if this normalized log-likelihood
ratio is, respectively, above or below $\lambda$. If
$\underline{\lambda} < \overline{\lambda}$ then one decides on
$H_1$ or $H_2$ if the normalized log-likelihood ratio is,
respectively, above the upper threshold $\overline{\lambda}$ or
below the lower threshold $\underline{\lambda}$. Otherwise, if the
normalized log-likelihood ratio is between the upper and lower
thresholds, then an erasure is declared and no decision is taken
in this case.

Let
\begin{eqnarray}
&& \alpha_n^{(1)} \triangleq P_1^n \Bigl( L(X_1, \ldots, X_n) \leq
n \overline{\lambda} \Bigr)
\label{eq: error and erasure event under hypothesis H1} \\
&& \alpha_n^{(2)} \triangleq P_1^n \Bigl( L(X_1, \ldots, X_n) \leq
n \underline{\lambda} \Bigr) \label{eq: error event under
hypothesis H1}
\end{eqnarray}
and
\begin{eqnarray}
&& \beta_n^{(1)}  \triangleq P_2^n \Bigl( L(X_1, \ldots, X_n) \geq
n \underline{\lambda} \Bigr)
\label{eq: error and erasure event under hypothesis H2} \\
&& \beta_n^{(2)}  \triangleq P_2^n \Bigl( L(X_1, \ldots, X_n) \geq
n \overline{\lambda} \Bigr) \label{eq: error event under
hypothesis H2}
\end{eqnarray}
then $\alpha_n^{(1)}$ and $\beta_n^{(1)}$ are the probabilities of
either making an error or declaring an erasure under,
respectively, hypotheses $H_1$ and $H_2$; similarly
$\alpha_n^{(2)}$ and $\beta_n^{(2)}$ are the probabilities of
making an error under hypotheses $H_1$ and $H_2$, respectively.

Let $\pi_1, \pi_2 \in (0,1)$ denote the a-priori probabilities of
the hypotheses $H_1$ and $H_2$, respectively, so
\begin{equation}
P_{\text{e}, n}^{(1)} = \pi_1 \alpha_n^{(1)} + \pi_2 \beta_n^{(1)}
\label{eq: overall probability of a mixed error and erasure event}
\end{equation}
is the probability of having either an error or an erasure, and
\begin{equation}
P_{\text{e}, n}^{(2)} = \pi_1 \alpha_n^{(2)} + \pi_2 \beta_n^{(2)}
\label{eq: overall error probability}
\end{equation}
is the probability of error.

\subsection{Exact Exponents} When we let $n$ tend to infinity,
the exact exponents of $\alpha_n^{(j)}$ and $\beta_n^{(j)}$
($j=1,2$) are derived via Cram\'{e}r's theorem. The resulting
exponents form a straightforward generalization of, e.g.,
\cite[Theorem~3.4.3]{Dembo_Zeitouni} and
\cite[Theorem~6.4]{Hollander_book_2000} that addresses the case
where the decision is made based on a single threshold of the
log-likelihood ratio. In this particular case where
$\overline{\lambda} = \underline{\lambda} \triangleq \lambda$, the
option of erasures does not exist, and $P_{\text{e}, n}^{(1)} =
P_{\text{e}, n}^{(2)} \triangleq P_{\text{e}, n}$ is the error
probability.

In the considered general case with erasures, let $$ \lambda_1
\triangleq -\overline{\lambda}, \quad \lambda_2 \triangleq
-\underline{\lambda} $$ then Cram\'{e}r's theorem on $\reals$
yields that the exact exponents of $\alpha_n^{(1)}$,
$\alpha_n^{(2)}$, $\beta_n^{(1)}$ and $\beta_n^{(2)}$ are given by
\begin{eqnarray}
&& \lim_{n \rightarrow \infty} -\frac{\ln \alpha_n^{(1)}}{n} =
I(\lambda_{1})
\label{eq: exponent of alpha_n1} \\[0.1cm]
&& \lim_{n \rightarrow \infty} -\frac{\ln \alpha_n^{(2)}}{n} =
I(\lambda_{2})
\label{eq: exponent of alpha_n2} \\[0.1cm]
&& \lim_{n \rightarrow \infty} -\frac{\ln \beta_n^{(1)}}{n} =
I(\lambda_2) - \lambda_2 \label{eq: exponent of beta_n1} \\[0.1cm]
&& \lim_{n \rightarrow \infty} -\frac{\ln \beta_n^{(2)}}{n} =
I(\lambda_1) - \lambda_1 \label{eq: exponent of beta_n2}
\end{eqnarray}
where the rate function $I$ is given by
\begin{equation}
I(r) \triangleq \sup_{t \in \reals} \bigl( tr - H(t) \bigr)
\label{eq: rate function}
\end{equation}
and
\begin{equation}
H(t) = \ln \Biggl(\sum_{x \in \mathcal{X}} P_1(x)^{1-t} P_2(x)^t
\Biggr), \quad \forall \, t \in \reals. \label{eq: H}
\end{equation}
The rate function $I$ is convex, lower semi-continuous (l.s.c.)
and non-negative (see, e.g., \cite{Dembo_Zeitouni} and
\cite{Hollander_book_2000}). Note that
$$H(t) = (t-1) D_t(P_2||P_1)$$ where $D_t(P||Q)$ designates R\'{e}yni's
information divergence of order $t$, and $I$ in \eqref{eq: rate
function} is the Fenchel-Legendre transform of $H$ (see, e.g.,
\cite[Definition~2.2.2]{Dembo_Zeitouni}).

From \eqref{eq: overall probability of a mixed error and erasure
event}-- \eqref{eq: exponent of beta_n2}, the exact exponents of
$P_{\text{e}, n}^{(1)}$ and $P_{\text{e}, n}^{(2)}$ are equal to
\begin{equation}
\lim_{n \rightarrow \infty} - \frac{\ln P_{\text{e}, n}^{(1)}}{n}
= \min \Bigl\{ I(\lambda_1), I(\lambda_2) - \lambda_2 \Bigr\}
\label{eq: exact exponent of the overall error and erasure
probability}
\end{equation}
and
\begin{equation}
\lim_{n \rightarrow \infty} - \frac{\ln P_{\text{e}, n}^{(2)}}{n}
= \min \Bigl\{ I(\lambda_2), I(\lambda_1) - \lambda_1 \Bigr\}.
\label{eq: exact error exponent}
\end{equation}

For the case where the decision is based on a single threshold for
the log-likelihood ratio (i.e., $\lambda_1 = \lambda_2 \triangleq
\lambda$), then $P_{\text{e}, n}^{(1)} = P_{\text{e}, n}^{(2)}
\triangleq P_{\text{e}, n}$, and its error exponent is equal to
\begin{equation}
\lim_{n \rightarrow \infty} - \frac{\ln P_{\text{e}, n}}{n} = \min
\Bigl\{ I(\lambda), I(\lambda) - \lambda \Bigr\} \label{eq: exact
exponent of the error probability for a single threshold}
\end{equation}
which coincides with the error exponent in
\cite[Theorem~3.4.3]{Dembo_Zeitouni} (or
\cite[Theorem~6.4]{Hollander_book_2000}). The optimal threshold
for obtaining the best error exponent of the error probability
$P_{\text{e}, n}$ is equal to zero (i.e., $\lambda=0$); in this
case, the exact error exponent is equal to
\begin{equation}
\hspace*{-0.2cm} I(0) = -\min_{0 \leq t \leq 1} \ln \Biggl(
\sum_{x \in \mathcal{X}} P_1(x)^{1-t} P_2(x)^t \Biggr) \triangleq
C(P_1, P_2) \label{eq: Chernoff information}
\end{equation}
which is the Chernoff information of the probability measures
$P_1$ and $P_2$ (see \cite[Eq.~(11.239)]{Cover and Thomas}), and
it is symmetric (i.e., $C(P_1, P_2) = C(P_2, P_1)$). Note that,
from \eqref{eq: rate function}, $I(0) = \sup_{t \in
\reals}\bigl(-H(t)\bigr) = -\inf_{t \in \reals}\bigl(H(t)\bigr)$;
the minimization in \eqref{eq: Chernoff information} over the
interval $[0,1]$ (instead of taking the infimum of $H$ over
$\reals$) is due to the fact that $H(0) = H(1) = 0$ and the
function $H$ in \eqref{eq: H} is convex, so it is enough to
restrict the infimum of $H$ to the closed interval $[0,1]$ for
which it turns to be a minimum.

\subsection{Lower Bound on the Exponents via
Theorem~\ref{theorem: first refined concentration inequality}} In
the following, the tightness of Theorem~\ref{theorem: first
refined concentration inequality} is examined by using it for the
derivation of lower bounds on the error exponent and the exponent
of the event of having either an error or an erasure. These
results will be compared in the next sub-section to the exact
exponents from the previous sub-section.

We first derive a lower bound on the exponent of $\alpha_n^{(1)}$.
Under hypothesis $H_1$, let us construct the martingale sequence
$\{U_k, \mathcal{F}_k\}_{k=0}^n$ where $\mathcal{F}_0 \subseteq
\mathcal{F}_1 \subseteq \ldots \mathcal{F}_n$ is the filtration
$$ \mathcal{F}_0 = \{\emptyset, \Omega\}, \quad \mathcal{F}_k =
\sigma(X_1, \ldots, X_k), \; \; \forall \, k \in \{1, \ldots,
n\}$$ and
\begin{equation}
U_k = \expectation_{P_1^n} \bigl[ L(X_1, \ldots, X_n) \; | \;
\mathcal{F}_k \bigr]. \label{eq: martingale sequence U under
hypothesis H1}
\end{equation}
For every $k \in \{0, \ldots, n\}$
\begin{eqnarray*}
&& U_k = \expectation_{P_1^n} \Biggl[  \sum_{i=1}^n
\ln \frac{P_1(X_i)}{P_2(X_i)} \; \Big| \; \mathcal{F}_k  \Biggr] \\
&& \hspace*{0.5cm} =  \sum_{i=1}^k \ln \frac{P_1(X_i)}{P_2(X_i)} +
\sum_{i=k+1}^n \expectation_{P_1^n} \Biggl[
\ln \frac{P_1(X_i)}{P_2(X_i)} \Biggr] \\
&& \hspace*{0.5cm} =  \sum_{i=1}^k \ln \frac{P_1(X_i)}{P_2(X_i)} +
(n-k) D(P_1 || P_2).
\end{eqnarray*}
In particular
\begin{eqnarray}
&& U_0 = n D(P_1 || P_2), \label{eq: initial value of the
martingale U that is related to the binary hypothesis testing}   \\
&& U_n = \sum_{i=1}^n \ln \frac{P_1(X_i)}{P_2(X_i)} = L(X_1,
\ldots, X_n) \label{eq: final value of the martingale U that is
related to the binary hypothesis testing}
\end{eqnarray}
and, for every $k \in \{1, \ldots, n\}$,
\begin{equation}
U_k - U_{k-1} = \ln \frac{P_1(X_k)}{P_2(X_k)} - D(P_1 || P_2).
\label{eq: jumps of the martingale U that is related to the binary
hypothesis testing}
\end{equation}
Let
\begin{equation}
d_1 \triangleq \max_{x \in \mathcal{X}} \left| \ln
\frac{P_1(x)}{P_2(x)} - D(P_1 || P_2) \right| \label{eq: d1}
\end{equation}
so $d_1 < \infty$ since by assumption the alphabet set
$\mathcal{X}$ is finite, and $P_1(x), P_2(x) > 0$ for every $x \in
\mathcal{X}$. From \eqref{eq: jumps of the martingale U that is
related to the binary hypothesis testing} and \eqref{eq: d1}
$$|U_k - U_{k-1}| \leq d_1$$ holds a.s. for every $k \in \{1, \ldots,
n\}$, and
\begin{eqnarray}
&& \expectation_{P_1^n} \bigl[ (U_k - U_{k-1})^2 \, | \,
\mathcal{F}_{k-1} \bigr] \nonumber \\
&& = \expectation_{P_1} \left[ \left( \ln
\frac{P_1(X_k)}{P_2(X_k)}
- D(P_1 || P_2) \right)^2 \right] \nonumber \\
&& = \sum_{x \in \mathcal{X}} \left\{ P_1(x) \left( \ln
\frac{P_1(x)}{P_2(x)} - D(P_1 || P_2) \right)^2
\right\} \nonumber \\
&& \triangleq \sigma_1^2. \label{eq: sigma1 squared for the jumps
of the martingale U}
\end{eqnarray}

Let
\begin{eqnarray}
&& \hspace*{-1cm} \varepsilon_{1,1} = D(P_1 || P_2) -
\overline{\lambda}, \quad \varepsilon_{2,1} = D(P_2 || P_1) +
\underline{\lambda}
\label{eq: the epsilons introduced for mixed errors and erasures in binary hypothesis testing} \\
&& \hspace*{-1cm} \varepsilon_{1,2} = D(P_1 || P_2) -
\underline{\lambda}, \quad \varepsilon_{2,2} = D(P_2 || P_1) +
\overline{\lambda} \label{eq: the epsilons introduced for errors
in binary hypothesis testing}
\end{eqnarray}
The probability of making an erroneous decision on hypothesis
$H_2$ or declaring an erasure under the hypothesis $H_1$ is equal
to $\alpha_n^{(1)}$, and from Theorem~\ref{theorem: first refined
concentration inequality}
\begin{eqnarray}
&& \alpha_n^{(1)} \triangleq P_1^n \bigl( L(X_1, \ldots, X_n)
\leq n \overline{\lambda} \bigr) \nonumber \\
&& \hspace*{0.7cm} \stackrel{\text{(a)}}{=} P_1^n(U_n - U_0 \leq
-\varepsilon_{1,1} \, n)
\label{eq: intermediate step in the derivation of a bound on alpha1} \\
&& \hspace*{0.7cm} \stackrel{\text{(b)}}{\leq} \exp \left(-n \,
D\Bigl(\frac{\delta_{1,1} + \gamma_1}{1+\gamma_1} \Big|\Big|
\frac{\gamma_1}{1+\gamma_1} \Bigr) \right) \label{eq:
concentration inequality for the first error event}
\end{eqnarray}
where equality~(a) follows from \eqref{eq: initial value of the
martingale U that is related to the binary hypothesis testing},
\eqref{eq: final value of the martingale U that is related to the
binary hypothesis testing} and \eqref{eq: the epsilons introduced
for mixed errors and erasures in binary hypothesis testing}, and
inequality~(b) follows from Theorem~\ref{theorem: first refined
concentration inequality} with
\begin{equation}
\gamma_1 \triangleq \frac{\sigma_1^2}{d_1^2}, \quad \delta_{1,1}
\triangleq \frac{\varepsilon_{1,1}}{d_1}. \label{eq: gamma1 and
delta1,1}
\end{equation}
Note that if $\varepsilon_{1,1} > d_1$ then it follows from
\eqref{eq: jumps of the martingale U that is related to the binary
hypothesis testing} and \eqref{eq: d1} that $\alpha_n^{(1)}$ is
zero; in this case $\delta_{1,1} > 1$, so the divergence in
\eqref{eq: concentration inequality for the first error event} is
infinity and the upper bound is also equal to zero. Hence, it is
assumed without loss of generality that $\delta_{1,1} \in [0, 1]$.

Similarly to \eqref{eq: martingale sequence U under hypothesis
H1}, under hypothesis~$H_2$, let us define the martingale sequence
$\{U_k, \mathcal{F}_k\}_{k=0}^n$ with the same filtration and
\begin{equation}
\hspace*{-0.4cm} U_k = \expectation_{P_2^n} \bigl[ L(X_1, \ldots,
X_n) \; | \; \mathcal{F}_k \bigr], \quad \forall \, k \in \{0,
\ldots, n\}. \label{eq: martingale sequence U  under hypothesis
H2}
\end{equation}
For every $k \in \{0, \ldots, n\}$
\begin{eqnarray*}
&& U_k = \sum_{i=1}^k \ln \frac{P_1(X_i)}{P_2(X_i)} - (n-k) D(P_2
|| P_1)
\end{eqnarray*}
and in particular
\begin{equation}
U_0 = -n D(P_2 || P_1), \quad U_n = L(X_1, \ldots, X_n).
\label{eq: initial and final values of the second martingale
sequence}
\end{equation}
For every $k \in \{1, \ldots, n\}$,
\begin{equation}
U_k - U_{k-1} = \ln \frac{P_1(X_k)}{P_2(X_k)} + D(P_2 || P_1).
\label{eq: jumps of the martingale U under hypothesis H2}
\end{equation}
Let
\begin{equation}
d_2 \triangleq \max_{x \in \mathcal{X}} \left| \ln
\frac{P_2(x)}{P_1(x)} - D(P_2 || P_1) \right| \label{eq: d2}
\end{equation}
then, the jumps of the latter martingale sequence are uniformly
bounded by $d_2$ and, similarly to \eqref{eq: sigma1 squared for
the jumps of the martingale U}, for every $k \in \{1, \ldots, n\}$
\begin{eqnarray}
&& \expectation_{P_2^n} \bigl[ (U_k - U_{k-1})^2 \, | \,
\mathcal{F}_{k-1} \bigr] \nonumber \\
&& = \sum_{x \in \mathcal{X}} \left\{ P_2(x) \left( \ln
\frac{P_2(x)}{P_1(x)} - D(P_2 || P_1) \right)^2 \right\} \nonumber
\\ && \triangleq \sigma_2^2. \label{eq: sigma2 squared for the
jumps of the martingale U}
\end{eqnarray}
Hence, it follows from Theorem~\ref{theorem: first refined
concentration inequality} that
\begin{eqnarray}
&& \beta_n^{(1)} \triangleq P_2^n \bigl( L(X_1, \ldots, X_n)
\geq n \underline{\lambda} \bigr) \nonumber \\
&& \hspace*{0.7cm} = P_2^n(U_n - U_0 \geq \varepsilon_{2,1} \, n)
\label{eq: intermediate step in the derivation of a bound on beta1} \\
&& \hspace*{0.7cm} \leq  \exp \left(-n \,
D\Bigl(\frac{\delta_{2,1} + \gamma_2}{1+\gamma_2} \Big|\Big|
\frac{\gamma_2}{1+\gamma_2} \Bigr) \right) \label{eq:
concentration inequality for the second error event}
\end{eqnarray}
where the equality in \eqref{eq: intermediate step in the
derivation of a bound on beta1} holds due to \eqref{eq: initial
and final values of the second martingale sequence} and \eqref{eq:
the epsilons introduced for mixed errors and erasures in binary
hypothesis testing}, and \eqref{eq: concentration inequality for
the second error event} follows from Theorem~\ref{theorem: first
refined concentration inequality} with
\begin{equation}
\gamma_2 \triangleq \frac{\sigma_2^2}{d_2^2}, \quad \delta_{2,1}
\triangleq \frac{\varepsilon_{2,1}}{d_2} \label{eq: gamma2 and
delta2,1}
\end{equation}
and $d_2$, $\sigma_2$ are introduced, respectively, in \eqref{eq:
d2} and \eqref{eq: sigma2 squared for the jumps of the martingale
U}.

From \eqref{eq: overall probability of a mixed error and erasure
event}, \eqref{eq: concentration inequality for the first error
event} and \eqref{eq: concentration inequality for the second
error event}, the exponent of the probability of either having an
error or an erasure is lower bounded by
\begin{equation}
\lim_{n \rightarrow \infty} - \frac{\ln P_{\text{e}, n}^{(1)}}{n}
\geq \min_{i=1,2} D\Bigl(\frac{\delta_{i,1} +
\gamma_i}{1+\gamma_i} \Big|\Big| \frac{\gamma_i}{1+\gamma_i}
\Bigr). \label{eq: lower bound on the exponent of mixed errors and
erasures for binary hypothesis testing}
\end{equation}
Similarly to the above analysis, one gets from \eqref{eq: overall
error probability} and \eqref{eq: the epsilons introduced for
errors in binary hypothesis testing} that the error exponent is
lower bounded by
\begin{equation}
\lim_{n \rightarrow \infty} - \frac{\ln P_{\text{e}, n}^{(2)}}{n}
\geq \min_{i=1,2} D\Bigl(\frac{\delta_{i,2} +
\gamma_i}{1+\gamma_i} \Big|\Big| \frac{\gamma_i}{1+\gamma_i}
\Bigr) \label{eq: lower bound on the error exponent for binary
hypothesis testing}
\end{equation}
where
\begin{equation}
\delta_{1,2} \triangleq \frac{\varepsilon_{1,2}}{d_1}, \quad
\delta_{2,2} \triangleq \frac{\varepsilon_{2,2}}{d_2}. \label{eq:
delta1,2 and delta2,2}
\end{equation}

For the case of a single threshold (i.e., $\overline{\lambda} =
\underline{\lambda} \triangleq \lambda$) then \eqref{eq: lower
bound on the exponent of mixed errors and erasures for binary
hypothesis testing} and \eqref{eq: lower bound on the error
exponent for binary hypothesis testing} coincide, and one obtains
that the error exponent satisfies
\begin{equation}
\lim_{n \rightarrow \infty} - \frac{\ln P_{\text{e}, n}}{n} \geq
\min_{i=1,2} D\Bigl(\frac{\delta_i + \gamma_i}{1+\gamma_i}
\Big|\Big| \frac{\gamma_i}{1+\gamma_i} \Bigr) \label{eq: lower
bound on the error exponent for binary hypothesis testing with a
single threshold}
\end{equation}
where $\delta_i$ is the common value of $\delta_{i,1}$ and
$\delta_{i,2}$ (for $i=1,2$). In this special case, the zero
threshold is optimal (see, e.g., \cite[p.~93]{Dembo_Zeitouni}),
which then yields that \eqref{eq: lower bound on the error
exponent for binary hypothesis testing with a single threshold} is
satisfied with
\begin{equation}
\delta_1 = \frac{D(P_1 || P_2)}{d_1}, \quad \delta_2 = \frac{D(P_2
|| P_1)}{d_2} \label{eq: delta1 and delta2}
\end{equation}
with $d_1$ and $d_2$ from \eqref{eq: d1} and \eqref{eq: d2},
respectively. The right-hand side of \eqref{eq: lower bound on the
error exponent for binary hypothesis testing with a single
threshold} forms a lower bound on Chernoff information which is
the exact error exponent for this special case.

\subsection{Comparison of the Lower Bounds on the Exponents with
those that Follow from Azuma's Inequality} The lower bounds on the
error exponent and the exponent of the probability of having
either errors or erasures, that were derived in the previous
sub-section via Theorem~\ref{theorem: first refined concentration
inequality}, are compared in the following to the loosened lower
bounds on these exponents that follow from Azuma's inequality.

We first obtain upper bounds on $\alpha_n^{(1)}, \alpha_n^{(2)},
\beta_n^{(1)}$ and $\beta_n^{(2)}$ via Azuma's inequality, and
then use them to derive lower bounds on the exponents of
$P_{\text{e},n}^{(1)}$ and $P_{\text{e},n}^{(2)}$.

From \eqref{eq: jumps of the martingale U that is related to the
binary hypothesis testing}, \eqref{eq: d1}, \eqref{eq:
intermediate step in the derivation of a bound on alpha1},
\eqref{eq: gamma1 and delta1,1}, and Azuma's inequality
\begin{equation}
\alpha_n^{(1)} \leq \exp \biggl(-\frac{\delta_{1,1}^2 n}{2}
\biggr) \label{eq: Azuma's inequality for alpha1}
\end{equation}
and, similarly, from \eqref{eq: jumps of the martingale U under
hypothesis H2}, \eqref{eq: d2}, \eqref{eq: intermediate step in
the derivation of a bound on beta1}, \eqref{eq: gamma2 and
delta2,1}, and Azuma's inequality
\begin{equation}
\beta_n^{(1)} \leq \exp \biggl(-\frac{\delta_{2,1}^2 n}{2}
\biggr). \label{eq: Azuma's inequality for beta1}
\end{equation}
From \eqref{eq: error event under hypothesis H1}, \eqref{eq: error
event under hypothesis H2}, \eqref{eq: the epsilons introduced for
errors in binary hypothesis testing}, \eqref{eq: delta1,2 and
delta2,2} and Azuma's inequality
\begin{eqnarray}
&& \hspace*{-0.5cm} \alpha_n^{(2)} \leq \exp
\biggl(-\frac{\delta_{1,2}^2 n}{2} \biggr)
\label{eq: Azuma's inequality for alpha2} \\
&& \hspace*{-0.5cm} \beta_n^{(2)} \leq \exp
\biggl(-\frac{\delta_{2,2}^2 n}{2} \biggr). \label{eq: Azuma's
inequality for beta2}
\end{eqnarray}
Therefore, it follows from \eqref{eq: overall probability of a
mixed error and erasure event}, \eqref{eq: overall error
probability} and \eqref{eq: Azuma's inequality for
alpha1}--\eqref{eq: Azuma's inequality for beta2} that the
resulting lower bounds on the exponents of $P_{\text{e},n}^{(1)}$
and $P_{\text{e},n}^{(2)}$ are
\begin{equation}
\lim_{n \rightarrow \infty} - \frac{\ln P_{\text{e}, n}^{(j)}}{n}
\geq \min_{i=1,2} \frac{\delta_{i,j}^2}{2}, \quad j = 1, 2
\label{eq: lower bounds on the exponents for binary hypothesis
testing via Azuma inequality}
\end{equation}
as compared to \eqref{eq: lower bound on the exponent of mixed
errors and erasures for binary hypothesis testing} and \eqref{eq:
lower bound on the error exponent for binary hypothesis testing}
which give, for $j=1,2$,
\begin{equation}
\lim_{n \rightarrow \infty} - \frac{\ln P_{\text{e}, n}^{(j)}}{n}
\geq \min_{i=1,2} D\Bigl(\frac{\delta_{i,j} +
\gamma_i}{1+\gamma_i} \Big|\Big| \frac{\gamma_i}{1+\gamma_i}
\Bigr). \label{eq: lower bounds on the exponents for binary
hypothesis testing via Theorem 2}
\end{equation}
For the specific case of a zero threshold, the lower bound on the
error exponent which follows from Azuma's inequality is given by
\begin{equation}
\lim_{n \rightarrow \infty} - \frac{\ln P_{\text{e}, n}^{(j)}}{n}
\geq \min_{i=1,2} \frac{\delta_i^2}{2} \label{eq: loosened lower
bound on the error exponent for zero threshold}
\end{equation}
with the values of $\delta_1$ and $\delta_2$ in \eqref{eq: delta1
and delta2}.

The lower bounds on the exponents in \eqref{eq: lower bounds on
the exponents for binary hypothesis testing via Azuma inequality}
and \eqref{eq: lower bounds on the exponents for binary hypothesis
testing via Theorem 2} are compared in the following. Note that
the lower bounds in \eqref{eq: lower bounds on the exponents for
binary hypothesis testing via Azuma inequality} are loosened as
compared to those in \eqref{eq: lower bounds on the exponents for
binary hypothesis testing via Theorem 2} since they follow,
respectively, from Azuma's inequality and its improvement in
Theorem~\ref{theorem: first refined concentration inequality}.

The divergence in the exponent of \eqref{eq: lower bounds on the
exponents for binary hypothesis testing via Theorem 2} is equal to

\small \vspace*{-0.2cm}
\begin{eqnarray}
&& \hspace*{-0.8cm} D\Bigl(\frac{\delta_{i,j} +
\gamma_i}{1+\gamma_i}
\Big|\Big| \frac{\gamma_i}{1+\gamma_i} \Bigr) \nonumber \\[0.1cm]
&& \hspace*{-0.8cm} =
\left(\frac{\delta_{i,j}+\gamma_i}{1+\gamma_i} \right) \ln \left(
1 + \frac{\delta_{i,j}}{\gamma_i} \right) +
\left(\frac{1-\delta_{i,j}}{1+\gamma_i}\right)
\ln(1-\delta_{i,j}) \nonumber \\[0.1cm]
&& \hspace*{-0.8cm} = \frac{\gamma_i}{1+\gamma_i} \left[ \left( 1
+ \frac{\delta_{i,j}}{\gamma_i} \right) \ln \Bigl( 1 +
\frac{\delta_{i,j}}{\gamma_i}
\Bigr) + \frac{(1-\delta_{i,j}) \ln(1-\delta_{i,j})}{\gamma_i}\right]. \nonumber \\
\label{eq: equality for the divergence}
\end{eqnarray}

\normalsize
\begin{lemma}
\begin{equation}
(1+u) \ln(1+u) \geq \left\{
\begin{array}{ll}
u + \frac{u^2}{2}, \quad & u \in [-1, 0] \\[0.2cm]
u+\frac{u^2}{2}-\frac{u^3}{6}, \quad & u \geq 0
\end{array}
\right. \label{eq: inequality for lower bounding the divergence}
\end{equation}
where at $u=-1$, the left-hand side is defined to be zero (it is
the limit of this function when $u \rightarrow -1$ from above).
\label{lemma: inequality for lower bounding the divergence}
\end{lemma}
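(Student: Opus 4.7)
The plan is to handle the two inequalities separately by reducing each to a standard monotonicity argument via two differentiations, exploiting that both sides and their first derivatives agree at $u = 0$.

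For the range $u \in [-1, 0]$, I would set $f(u) = (1+u) \ln(1+u) - u - \tfrac{u^2}{2}$ and show $f(u) \geq 0$. Direct differentiation gives $f'(u) = \ln(1+u) - u$ and $f''(u) = -u/(1+u)$. On $[-1, 0]$ we have $f''(u) \geq 0$, so $f'$ is non-decreasing, and since $f'(0) = 0$ this forces $f'(u) \leq 0$ on $[-1, 0]$. Hence $f$ is non-increasing on that interval, and since $f(0) = 0$ we conclude $f(u) \geq 0$ for every $u \in [-1, 0]$. The boundary case $u = -1$ follows by continuity from the right, using the stated convention that the left-hand side is interpreted as its limit $0$.

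For the range $u \geq 0$, I would instead set $g(u) = (1+u) \ln(1+u) - u - \tfrac{u^2}{2} + \tfrac{u^3}{6}$ and again aim to show $g(u) \geq 0$. Differentiating twice gives $g'(u) = \ln(1+u) - u + \tfrac{u^2}{2}$ and, after simplification, $g''(u) = u^2/(1+u)$, which is non-negative for $u \geq 0$. Thus $g'$ is non-decreasing, and since $g'(0) = 0$ we obtain $g'(u) \geq 0$ on $[0, \infty)$; combined with $g(0) = 0$ this yields $g(u) \geq 0$ throughout $[0, \infty)$, as desired.

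There is no real obstacle here: the argument is purely calculus, and the only mild subtlety is the behavior at the endpoint $u = -1$, which is covered by the convention adopted in the statement. The reason the $u^3/6$ correction must appear only on the positive side is visible from the analysis: on $[-1, 0]$ the quantity $-u/(1+u)$ is already non-negative (so no cubic correction is needed), while for $u \geq 0$ one needs the extra $u^3/6$ term to absorb the sign change and keep $g''$ non-negative.
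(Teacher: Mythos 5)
Your proof is correct: the derivative computations $f''(u)=-u/(1+u)\ge 0$ on $[-1,0]$ and $g''(u)=u^2/(1+u)\ge 0$ for $u\ge 0$, together with the vanishing of the functions and their first derivatives at $u=0$ and the limiting convention at $u=-1$, give exactly the two stated bounds. This is the same elementary-calculus argument the paper invokes (it omits the details, stating only that the proof follows by elementary calculus), so nothing further is needed.
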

\begin{proof}
The proof follows by elementary calculus. 
\end{proof}

Since $\delta_{i,j} \in [0,1]$, then \eqref{eq: equality for the
divergence} and Lemma~\ref{lemma: inequality for lower bounding
the divergence} imply that
\begin{equation}
D\Bigl(\frac{\delta_{i,j} + \gamma_i}{1+\gamma_i} \Big|\Big|
\frac{\gamma_i}{1+\gamma_i} \Bigr) \geq \frac{\delta_{i,j}^2}{2
\gamma_i} - \frac{\delta_{i,j}^3}{6 \gamma_i^2 (1+\gamma_i)}.
\label{eq: lower bound on the lower bound of the exponents}
\end{equation}
Hence, by comparing \eqref{eq: lower bounds on the exponents for
binary hypothesis testing via Azuma inequality} with the
combination of \eqref{eq: lower bounds on the exponents for binary
hypothesis testing via Theorem 2} and \eqref{eq: lower bound on
the lower bound of the exponents}, then it follows that (up to a
second-order approximation) the lower bounds on the exponents that
were derived via Theorem~\ref{theorem: first refined concentration
inequality} are improved by at least a factor of $\bigl(\max
\gamma_i\bigr)^{-1}$ as compared to those that follow from Azuma's
inequality.

\begin{example}
Consider two probability measures $P_1$ and $P_2$ where
$$P_1(0) = P_2(1) = 0.4, \quad P_1(1) = P_2(0) = 0.6,$$
and the case of a single threshold of the log-likelihood ratio
that is set to zero (i.e., $\lambda = 0$). The exact error
exponent in this case is Chernoff information that is equal to
$$C(P_1, P_2) = 2.04 \cdot 10^{-2}.$$ The improved lower bound on
the error exponent in \eqref{eq: lower bound on the error exponent
for binary hypothesis testing with a single threshold} and
\eqref{eq: delta1 and delta2} is equal to $1.77 \cdot 10^{-2}$,
whereas the loosened lower bound in \eqref{eq: loosened lower
bound on the error exponent for zero threshold} is equal to $1.39
\cdot 10^{-2}$. In this case $\gamma_1 = \frac{2}{3}$ and
$\gamma_2 = \frac{7}{9}$, so the improvement in the lower bound on
the error exponent is indeed by a factor of approximately
$\left(\max_i \gamma_i \right)^{-1} = \frac{9}{7}.$ Note that,
from \eqref{eq: concentration inequality for the first error
event}, \eqref{eq: concentration inequality for the second error
event} and \eqref{eq: Azuma's inequality for alpha1}--\eqref{eq:
Azuma's inequality for beta2}, these are lower bounds on the error
exponents for any finite block length $n$, and not only
asymptotically in the limit where $n \rightarrow \infty$. The
operational meaning of this example is that the improved lower
bound on the error exponent assures that a fixed error probability
can be obtained based on a sequence of i.i.d. RVs whose length is
reduced by 22.2\% as compared to the loosened bound which follows
from Azuma's inequality.
\end{example}

\subsection{Comparison of the Exact and Lower Bounds on the Error
Exponents, Followed by a Relation to Fisher Information} In the
following, we compare the exact and lower bounds on the error
exponents. Consider the case where there is a single threshold on
the log-likelihood ratio (i.e., referring to the case where the
erasure option is not provided) that is set to zero. The exact
error exponent in this case is given by the Chernoff information
(see \eqref{eq: Chernoff information}), and it will be compared to
the two lower bounds on the error exponents that were derived in
the previous two subsections.

Let $\{P_{\theta}\}_{\theta \in \Theta}$, denote an indexed family
of probability mass functions where $\Theta$ denotes the parameter
set. Assume that $P_{\theta}$ is differentiable in the parameter
$\theta$. Then, the Fisher information is defined as
\begin{equation}
J(\theta) \triangleq \expectation_{\theta} \left[
\frac{\partial}{\partial \theta} \, \ln P_{\theta}(x) \right]^2
\label{eq: Fisher information}
\end{equation}
where the expectation is w.r.t. the probability mass function
$P_{\theta}$. The divergence and Fisher information are two
related information measures, satisfying the equality
\begin{equation}
\lim_{\theta' \rightarrow \theta} \frac{D(P_{\theta} ||
P_{\theta'})}{(\theta - \theta')^2} = \frac{J(\theta)}{2}
\label{eq: relation between the divergence and Fisher information}
\end{equation}
(note that if it was a relative entropy to base~2 then the
right-hand side of \eqref{eq: relation between the divergence and
Fisher information} would have been divided by $\ln 2$, and be
equal to $\frac{J(\theta)}{\ln 4}$ as in \cite[Eq.~(12.364)]{Cover
and Thomas}).
\begin{proposition}
Under the above assumptions,
\begin{itemize}
\item The Chernoff information and Fisher information are related
information measures that satisfy the equality
\begin{equation}
\lim_{\theta' \rightarrow \theta} \frac{C(P_{\theta},
P_{\theta'})}{(\theta - \theta')^2} = \frac{J(\theta)}{8}.
\label{eq: relation between the Chernoff information and Fisher
information}
\end{equation}
\item Let
\begin{equation}
E_{\text{L}}(P_{\theta}, P_{\theta'}) \triangleq \min_{i=1,2}
D\Bigl(\frac{\delta_i + \gamma_i}{1+\gamma_i} \Big|\Big|
\frac{\gamma_i}{1+\gamma_i} \Bigr) \label{eq: E_L}
\end{equation}
be the lower bound on the error exponent in \eqref{eq: lower bound
on the error exponent for binary hypothesis testing with a single
threshold} which corresponds to $P_1 \triangleq P_{\theta}$ and
$P_2 \triangleq P_{\theta'}$, then also
\begin{equation}
\lim_{\theta' \rightarrow \theta} \frac{E_{\text{L}}(P_{\theta},
P_{\theta'})}{(\theta - \theta')^2} = \frac{J(\theta)}{8}.
\label{eq: relation between the improved lower bound on the error
exponent and Fisher information}
\end{equation}
\item Let
\begin{equation}
\widetilde{E}_{\text{L}}(P_{\theta}, P_{\theta'}) \triangleq
\min_{i=1,2} \frac{\delta_i^2}{2} \label{eq: tilde E_L}
\end{equation}
be the loosened lower bound on the error exponent in \eqref{eq:
loosened lower bound on the error exponent for zero threshold}
which refers to $P_1 \triangleq P_{\theta}$ and $P_2 \triangleq
P_{\theta'}$. Then,
\begin{equation}
\lim_{\theta' \rightarrow \theta}
\frac{\widetilde{E}_{\text{L}}(P_{\theta}, P_{\theta'})}{(\theta -
\theta')^2} = \frac{a(\theta) \, J(\theta)}{8} \label{eq: relation
between the loosened lower bound on the error exponent and Fisher
information}
\end{equation}
for some deterministic function $a$ bounded in $[0, 1]$, and there
exists an indexed family of probability mass functions for which
$a(\theta)$ can be made arbitrarily close to zero for any fixed
value of $\theta \in \Theta$.
\end{itemize}
\label{proposition: Fisher information}
\end{proposition}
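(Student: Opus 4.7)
The plan is to reduce all three parts of the proposition to a single Taylor expansion, around $\theta$, of the log-likelihood ratio under $P_\theta$. Let $\eta \triangleq \theta' - \theta$ and let $s \triangleq \frac{\partial}{\partial \theta}\ln P_\theta$ denote the score. The identities $\expectation_{P_\theta}[s] = 0$ and $\expectation_{P_\theta}[s^2] = J(\theta)$ give, for $L_1(x) \triangleq \ln (P_{\theta'}(x)/P_\theta(x))$,
\begin{equation*}
\expectation_{P_\theta}[L_1] = -\tfrac{\eta^2}{2}J(\theta) + O(\eta^3), \qquad \text{Var}_{P_\theta}(L_1) = \eta^2 J(\theta) + O(\eta^3).
\end{equation*}
These two asymptotics will feed into every calculation below.

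For the first bullet, I would rewrite $C(P_\theta, P_{\theta'}) = -\inf_{t\in[0,1]}\ln \expectation_{P_\theta}[e^{tL_1}]$ via \eqref{eq: Chernoff information}, then Taylor-expand the cumulant generating function uniformly in $t\in[0,1]$ to obtain the quadratic $-\tfrac{\eta^2 J(\theta)}{2}(t-t^2) + O(\eta^3)$. Minimization gives the optimizer $t^\star = \tfrac{1}{2} + O(\eta)$ and minimum value $-\eta^2 J(\theta)/8 + O(\eta^3)$, which on dividing by $\eta^2$ proves \eqref{eq: relation between the Chernoff information and Fisher information}.

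For the second and third bullets, I substitute the expansion of $L_1$ into \eqref{eq: d1}, \eqref{eq: sigma1 squared for the jumps of the martingale U}, \eqref{eq: d2}, \eqref{eq: sigma2 squared for the jumps of the martingale U}, obtaining (for $i \neq j$)
\begin{equation*}
d_i \sim |\eta|\max_{x}|s(x)|, \quad \sigma_i^2 \sim \eta^2 J(\theta), \quad D(P_i\|P_j) \sim \tfrac{\eta^2}{2}J(\theta),
\end{equation*}
so that $\delta_i = D(P_i\|P_j)/d_i = O(|\eta|) \to 0$ and $\gamma_i \to J(\theta)/\max_{x} s(x)^2 \in (0,1]$. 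Applying the local expansion $D(p\|q) = (p-q)^2/(2q(1-q)) + O((p-q)^3)$ with $p = (\delta_i+\gamma_i)/(1+\gamma_i)$ and $q = \gamma_i/(1+\gamma_i)$ yields
\begin{equation*}
D\bigl((\delta_i+\gamma_i)/(1+\gamma_i)\,\|\,\gamma_i/(1+\gamma_i)\bigr) = \delta_i^2/(2\gamma_i) + O(\delta_i^3) = D(P_i\|P_j)^2/(2\sigma_i^2) + O(\eta^3),
\end{equation*}
which substitutes to $\eta^2 J(\theta)/8 + O(\eta^3)$ for each $i$, establishing \eqref{eq: relation between the improved lower bound on the error exponent and Fisher information}. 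The cancellation of $\max_x s(x)^2$ here is the single algebraic observation that separates this case from the loosened bound, where $\delta_i^2/2 = \gamma_i \cdot \eta^2 J(\theta)/8 + O(\eta^3)$ gives \eqref{eq: relation between the loosened lower bound on the error exponent and Fisher information} with $a(\theta) = \min_i \gamma_i(\theta) = J(\theta)/\max_{x} s(x)^2$; the bound $a(\theta) \in [0,1]$ follows from $\expectation_{P_\theta}[s^2] \leq \max_x s(x)^2$ since $s$ has zero mean.

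To realize an arbitrarily small $a(\theta_0)$ at a prescribed $\theta_0 \in \Theta$, I would take a Bernoulli family $P_\theta(0) = g(\theta)$ on $\mathcal{X} = \{0,1\}$ with $g$ smooth and $g(\theta_0) \triangleq \delta$ any desired small value; a direct calculation gives $J(\theta_0) = g'(\theta_0)^2/[\delta(1-\delta)]$, $\max_{x} s(\theta_0,x)^2 = g'(\theta_0)^2/\min(\delta,1-\delta)^2$, and hence $a(\theta_0) = \delta/(1-\delta)$, which is arbitrarily small. The main obstacle in the whole argument is the uniform control of the Taylor remainder in $t$ during the Chernoff minimization, but this is routine once one notes that the moment generating function of $L_1$ under $P_\theta$ is $C^\infty$ in $(t,\eta)$ on a compact neighborhood of $[0,1]\times\{0\}$.
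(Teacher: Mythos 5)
Your argument is correct, and it reaches the proposition by a genuinely different route than the paper. Writing $\eta=\theta'-\theta$ and $s=\frac{\partial}{\partial\theta}\ln P_\theta$ for the score, the paper handles the second bullet not by expanding the divergence $D\bigl(\frac{\delta_i+\gamma_i}{1+\gamma_i}\,\big\|\,\frac{\gamma_i}{1+\gamma_i}\bigr)$ two-sidedly, but by sandwiching $E_{\text{L}}$ between the Chernoff information (i.e., the first bullet) and the quadratic-minus-cubic lower bound inherited from Lemma~\ref{lemma: inequality for lower bounding the divergence}, and then evaluating $\lim_{\theta'\to\theta}\varepsilon_1^2/\bigl(2\sigma_1^2(\theta-\theta')^2\bigr)=J(\theta)/8$ through a chain of substitutions and two applications of L'H\^{o}pital's rule; you bypass both the sandwich and Lemma~\ref{lemma: inequality for lower bounding the divergence} by expanding $\varepsilon_i$, $\sigma_i^2$, $d_i$ directly (so $\delta_i=O(|\eta|)$ and $\gamma_i\to J(\theta)/\max_x s(x)^2$) and using $D(p\|q)=\frac{(p-q)^2}{2q(1-q)}+O\bigl((p-q)^3\bigr)$, which is legitimate since the limiting $\gamma_i$ is bounded away from zero — this tacitly requires $J(\theta)>0$ and a couple of derivatives in $\theta$ beyond the bare differentiability stated, but the paper's own L'H\^{o}pital computation rests on the same implicit regularity, so this is not a gap. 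Your third bullet is in fact sharper than the paper's: the paper only establishes the upper bound $\lim_{\theta'\to\theta}\widetilde{E}_{\text{L}}/(\theta-\theta')^2\le\frac{J(\theta)}{8}\lim_{\theta'\to\theta}\sigma_1^2/d_1^2$ and notes the ratio lies in $[0,1]$, whereas you identify the limit exactly as $a(\theta)=J(\theta)/\max_x s(\theta,x)^2$ (the bound $a\le 1$ being just $\expectation_\theta[s^2]\le\max_x s(x)^2$), which is consistent with the paper's ternary example (it gives $(1-\alpha)\theta$ there, for $\theta\le 1$), and your binary family with $P_{\theta_0}(0)=\delta$, yielding $a(\theta_0)=\delta/(1-\delta)$, is a simpler admissible substitute for that example. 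For the first bullet the paper gives no details ("by calculus"), and your uniform-in-$t$ expansion of $t\mapsto\ln\sum_x P_\theta(x)^{1-t}P_{\theta'}(x)^t$ with minimizer $t^\star=\tfrac12+O(\eta)$ and minimum $-\eta^2 J(\theta)/8+O(\eta^3)$ is a sound way to supply them. Net effect: your single Taylor-expansion framework is more self-contained and yields an explicit $a(\theta)$, while the paper's sandwich reuses already-proved inequalities and so needs only one-sided control of the divergence.
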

\begin{proof}
See Appendix~\ref{appendix: Fisher information}.
\end{proof}

\vspace*{0.1cm} Proposition~\ref{proposition: Fisher information}
shows that, in the considered setting, the refined lower bound on
the error exponent provides the correct behavior of the error
exponent for a binary hypothesis testing when the relative entropy
between the pair of probability mass functions that characterize
the two hypotheses tends to zero. This stays in contrast to the
loosened error exponent, which follows from Azuma's inequality,
whose scaling may differ significantly from the correct exponent
(for a concrete example, see the last part of the proof in
Appendix~\ref{appendix: Fisher information}).

\begin{example}
Consider the index family of of probability mass functions defined
over the binary alphabet $\mathcal{X} = \{0,1\}$:
$$ P_\theta(0) = 1-\theta, \; \; P_\theta(1) = \theta, \quad
\forall \, \theta \in (0,1).$$ From \eqref{eq: Fisher
information}, the Fisher information is equal to
$$ J(\theta) = \frac{1}{\theta} + \frac{1}{1-\theta}$$
and, at the point $\theta = 0.5$, $J(\theta) = 4$. Let $\theta_1 =
0.51$ and $\theta_2 = 0.49$, so from \eqref{eq: relation between
the Chernoff information and Fisher information} and \eqref{eq:
relation between the improved lower bound on the error exponent
and Fisher information}
$$ C(P_{\theta_1}, P_{\theta_2}), E_{\text{L}}(P_{\theta_1}, P_{\theta_2}) \approx
\frac{J(\theta) (\theta_1 - \theta_2)^2}{8} = 2.00 \cdot
10^{-4}.$$ Indeed, the exact values of $C(P_{\theta_1},
P_{\theta_2})$ and $E_{\text{L}}(P_{\theta_1}, P_{\theta_2})$ are
$2.000 \cdot 10^{-4}$ and $1.997 \cdot 10^{-4}$, respectively.
\end{example}

\section{Summary}
\label{section: summary} This work introduces a concentration
inequality for discrete-parameter martingales with uniformly
bounded jumps, which forms a refined version of Azuma's
inequality. The tightness of this concentration inequality is
studied via a large deviations analysis of binary hypothesis
testing, and the demonstration of its improved tightness over
Azuma's inequality is revisited in this context. Some links of the
derived lower bounds on the error exponents to some information
measures (e.g., the relative entropy and Fisher information) are
obtained along the way. This paper presents in part the work in
\cite{Sason_submitted_paper} where further concentration
inequalities that form a refinement of Azuma's inequality were
derived, followed by some further applications of these
concentration inequalities in information theory, communication,
and coding theory. It is meant to stimulate the use of some
refined versions of the Azuma-Hoeffding inequality in
information-theoretic aspects.

\appendices

\section{Some complementary remarks concerning the
construction of Doob's martingales} \label{Appendix: remarks on
Doob's martingales}

This appendix is relevant to the analysis in Section~\ref{section:
binary hypothesis testing}.

\begin{remark} Let $\{X_i, \mathcal{F}_i\}$ be a martingale sequence.

For every $i$, $\expectation[X_{i+1}] = \expectation \bigl[
\expectation[X_{i+1} | \mathcal{F}_i] \bigr] = \expectation[X_i]$,
so the expectation of a martingale stays constant.
\end{remark}

\begin{remark} One can generate martingale sequences by the
following procedure: Given a RV $X \in \LL^1(\Omega, \mathcal{F},
\pr)$ and an arbitrary filtration of sub $\sigma$-algebras
$\{\mathcal{F}_i\}$, let
\begin{equation*}
X_i = \expectation[X | \mathcal{F}_i] \quad i = 0, 1, \ldots .
\end{equation*}
Then, the sequence $X_0, X_1, \ldots$ forms a martingale since
\begin{enumerate}
\item The RV $X_i = \expectation[X |
\mathcal{F}_i]$ is $\mathcal{F}_i$-measurable, and also
$\expectation[|X_i|] \leq \expectation[|X|] < \infty$ (since
conditioning reduces the expectation of the absolute value).
\item By construction $\{\mathcal{F}_i\}$ is a filtration.
\item From the tower principle for conditional expectations,
since $\{\mathcal{F}_i\}$ is a filtration, then for every $i$
\begin{equation*}
\expectation[X_{i+1} | \mathcal{F}_i] = \expectation \bigl[
\expectation[ X | \mathcal{F}_{i+1}] | \mathcal{F}_i \bigr] =
\expectation[X | \mathcal{F}_i]  \quad \text{a.s}.
\end{equation*}
\end{enumerate}
\label{remark: construction of martingales}
\end{remark}

\begin{remark}
In continuation to Remark~\ref{remark: construction of
martingales}, one can choose $\mathcal{F}_0 = \{\Omega, \emptyset
\}$ and $\mathcal{F}_n = \mathcal{F}$. Hence, $X_0, X_1, \ldots,
X_n$ is a martingale sequence where
\begin{eqnarray*}
&& \hspace*{-0.8cm} X_0 = \expectation[X | \mathcal{F}_0] =
\expectation[X] \quad
\text{(since $X$ is independent of $\mathcal{F}_0$)} \\
&& \hspace*{-0.8cm} X_n = \expectation[X | \mathcal{F}_n] = X \;
\; \text{a.s.} \quad \text{(since $X$ is
$\mathcal{F}$-measurable)}.
\end{eqnarray*}
This has the following interpretation: At the beginning, we don't
know anything about $X$, so it is initially estimated by its
expectation. We then reveal at each step more and more information
about $X$ until we can specify it completely (a.s.).
\label{remark: construction of martingales (cont.)}
\end{remark}

\section{Proof of Proposition~\ref{proposition: Fisher information}}
\label{appendix: Fisher information} The proof of \eqref{eq:
relation between the Chernoff information and Fisher information}
is based on calculus, and it is similar to the proof of the limit
in \eqref{eq: relation between the divergence and Fisher
information} that relates the divergence and Fisher information.
For the proof of \eqref{eq: relation between the improved lower
bound on the error exponent and Fisher information}, note that

\vspace*{-0.3cm} \small
\begin{equation}
\hspace*{-0.25cm} C(P_{\theta}, P_{\theta'}) \geq
E_{\text{L}}(P_{\theta}, P_{\theta'}) \geq \min_{i=1,2}
\left\{\frac{\delta_i^2}{2 \gamma_i} - \frac{\delta_i^3}{6
\gamma_i^2 (1+\gamma_i)} \right\}. \label{eq: bounds on the
improved lower bound}
\end{equation}
\normalsize The left-hand side of \eqref{eq: bounds on the
improved lower bound} holds since $E_{\text{L}}$ is a lower bound
on the error exponent, and the exact value of this error exponent
is the Chernoff information. The right-hand side of \eqref{eq:
bounds on the improved lower bound} follows from Lemma~\ref{lemma:
inequality for lower bounding the divergence} (see \eqref{eq:
lower bound on the lower bound of the exponents}) and the
definition of $E_{\text{L}}$ in \eqref{eq: E_L}. By definition
$\gamma_i \triangleq \frac{\sigma_i^2}{d_i^2}$ and $\delta_i
\triangleq \frac{\varepsilon_i}{d_i}$ where, based on \eqref{eq:
delta1 and delta2},
\begin{equation}
\varepsilon_1 \triangleq D(P_\theta || P_{\theta'}), \quad
\varepsilon_2 \triangleq D(P_\theta' || P_\theta). \label{eq:
epsilon1,2}
\end{equation}
The term on the left-hand side of \eqref{eq: bounds on the
improved lower bound} therefore satisfies
\begin{eqnarray*}
&& \frac{\delta_i^2}{2 \gamma_i} -
\frac{\delta_i^3}{6 \gamma_i^2 (1+\gamma_i)} \\
&& = \frac{\varepsilon_i^2}{2 \sigma_i^2} - \frac{\varepsilon_i^3
d_i^3}{6 \sigma_i^2 (\sigma_i^2 + d_i^2)} \geq
\frac{\varepsilon_i^2}{2 \sigma_i^2} \left( 1 -
\frac{\varepsilon_i d_i}{3} \right)
\end{eqnarray*}
so it follows from \eqref{eq: bounds on the improved lower bound}
and the last inequality that
\begin{equation}
\hspace*{-0.25cm} C(P_{\theta}, P_{\theta'}) \geq
E_{\text{L}}(P_\theta, P_{\theta'}) \geq \min_{i=1,2} \left\{
\frac{\varepsilon_i^2}{2 \sigma_i^2} \left( 1 -
\frac{\varepsilon_i d_i}{3} \right) \right\}. \label{eq: 2nd ver.
for the bounds on the improved lower bound}
\end{equation}
Based on the continuity assumption of the indexed family
$\{P_{\theta}\}_{\theta \in \Theta}$, then it follows from
\eqref{eq: epsilon1,2} that
$$\lim_{\theta' \rightarrow \theta} \varepsilon_i=0,
\quad \forall \, i \in \{ 1, 2\}$$ and also, from \eqref{eq: d1}
and \eqref{eq: d2} with $P_1$ and $P_2$ replaced by $P_\theta$ and
$P_\theta'$ respectively, then
$$\lim_{\theta' \rightarrow \theta} d_i = 0, \quad \forall \, i \in \{ 1, 2\}.$$
It therefore follows from \eqref{eq: relation between the Chernoff
information and Fisher information} and \eqref{eq: 2nd ver. for
the bounds on the improved lower bound} that
\begin{equation}
\hspace*{-0.4cm} \frac{J(\theta)}{8} \geq \lim_{\theta'
\rightarrow \theta} \frac{E_{\text{L}}(P_{\theta},
P_{\theta'})}{(\theta - \theta')^2} \geq \lim_{\theta' \rightarrow
\theta} \, \min_{i=1,2} \left\{ \frac{\varepsilon_i^2}{2
\sigma_i^2 (\theta-\theta')^2} \right\}. \label{eq: bounds on the
limit}
\end{equation}
The idea is to show that the limit on the right-hand side of this
inequality is $\frac{J(\theta)}{8}$ (same as the left-hand side),
and hence, the limit of the middle term is also
$\frac{J(\theta)}{8}$.
\begin{eqnarray}
&& \hspace*{-0.7cm} \lim_{\theta' \rightarrow \theta}
\frac{\varepsilon_1^2}{2 \sigma_1^2 (\theta-\theta')^2}
\nonumber \\[0.1cm]
&& \hspace*{-0.7cm} \stackrel{(\text{a})}{=} \lim_{\theta'
\rightarrow \theta} \frac{D(P_\theta ||
P_{\theta'})^2}{2 \sigma_1^2 (\theta-\theta')^2} \nonumber \\[0.1cm]
&& \hspace*{-0.7cm} \stackrel{(\text{b})}{=} \frac{J(\theta)}{4}
\lim_{\theta' \rightarrow \theta} \frac{D(P_\theta ||
P_{\theta'})}{\sigma_1^2} \nonumber \\[0.1cm]
&& \hspace*{-0.7cm} \stackrel{(\text{c})}{=} \frac{J(\theta)}{4}
\lim_{\theta' \rightarrow \theta} \frac{D(P_\theta ||
P_{\theta'})}{\sum_{x \in \mathcal{X}} P_\theta(x) \left( \ln
\frac{P_\theta(x)}{P_{\theta'}(x)} - D(P_\theta ||
P_{\theta'}) \right)^2} \nonumber \\[0.1cm]
&& \hspace*{-0.7cm} \stackrel{(\text{d})}{=} \frac{J(\theta)}{4}
\lim_{\theta' \rightarrow \theta} \frac{D(P_\theta ||
P_{\theta'})}{\sum_{x \in \mathcal{X}} P_\theta(x) \left( \ln
\frac{P_\theta(x)}{P_{\theta'}(x)} \right)^2 \; - \; D(P_\theta ||
P_{\theta'})^2} \nonumber \\[0.1cm]
&& \hspace*{-0.7cm} \stackrel{(\text{e})}{=} \frac{J(\theta)^2}{8}
\lim_{\theta' \rightarrow \theta}
\frac{(\theta-\theta')^2}{\sum_{x \in \mathcal{X}} P_\theta(x)
\left( \ln \frac{P_\theta(x)}{P_{\theta'}(x)}
\right)^2 \; - \; D(P_\theta || P_{\theta'})^2} \nonumber \\[0.1cm]
&& \hspace*{-0.7cm} \stackrel{(\text{f})}{=} \frac{J(\theta)^2}{8}
\lim_{\theta' \rightarrow \theta}
\frac{(\theta-\theta')^2}{\sum_{x \in \mathcal{X}} P_\theta(x)
\left( \ln \frac{P_\theta(x)}{P_{\theta'}(x)}
\right)^2} \nonumber \\[0.1cm]
&& \hspace*{-0.7cm} \stackrel{(\text{g})}{=} \frac{J(\theta)}{8}
\label{eq: first calculated limit}
\end{eqnarray}
where equality~(a) follows from \eqref{eq: epsilon1,2},
equalities~(b), (e) and~(f) follow from \eqref{eq: relation
between the divergence and Fisher information},
equality~(c) follows from \eqref{eq: sigma1 squared for the jumps
of the martingale U} with $P_1 = P_{\theta}$ and $P_2 =
P_{\theta'}$, equality~(d) follows from the definition of the
divergence, and equality~(g) follows by calculus (the required
limit is calculated by using L'H\^{o}pital's rule twice) and from
the definition of Fisher information in \eqref{eq: Fisher
information}. Similarly, also
$$ \lim_{\theta' \rightarrow \theta} \frac{\varepsilon_2^2}{2
\sigma_2^2 (\theta-\theta')^2} = \frac{J(\theta)}{8}$$ so
$$ \lim_{\theta' \rightarrow \theta} \, \min_{i=1,2}
\left\{ \frac{\varepsilon_i^2}{2 \sigma_i^2 (\theta-\theta')^2}
\right\} = \frac{J(\theta)}{8}.$$ Hence, it follows from
\eqref{eq: bounds on the limit} that $ \lim_{\theta' \rightarrow
\theta} \frac{E_{\text{L}}(P_\theta, P_{\theta'})}{(\theta -
\theta')^2} = \frac{J(\theta)}{8}.$ This completes the proof of
\eqref{eq: relation between the improved lower bound on the error
exponent and Fisher information}.

We prove now Eq.~\eqref{eq: relation between the loosened lower
bound on the error exponent and Fisher information}. From
\eqref{eq: d1}, \eqref{eq: d2}, \eqref{eq: delta1 and delta2} and
\eqref{eq: tilde E_L}
\begin{equation*}
\widetilde{E}_{\text{L}}(P_\theta, P_{\theta'}) = \min_{i=1,2}
\frac{\varepsilon_i^2}{2 d_i^2}
\end{equation*}
with $\varepsilon_1$ and $\varepsilon_2$ in \eqref{eq:
epsilon1,2}. Hence,
\begin{equation*}
\lim_{\theta' \rightarrow \theta}
\frac{\widetilde{E}_{\text{L}}(P_\theta, P_{\theta'})}{(\theta' -
\theta)^2} \leq \lim_{\theta' \rightarrow \theta}
\frac{\varepsilon_1^2}{2 d_1^2 (\theta' - \theta)^2}
\end{equation*}
and from \eqref{eq: first calculated limit} and last inequality
then it follows that
\begin{eqnarray}
&& \hspace*{-1.2cm} \lim_{\theta' \rightarrow \theta}
\frac{\widetilde{E}_{\text{L}}(P_\theta, P_{\theta'})}{(\theta' - \theta)^2} \nonumber \\
&& \hspace*{-1.2cm} \leq \frac{J(\theta)}{8}
\lim_{\theta' \rightarrow \theta} \frac{\sigma_1^2}{d_1^2} \nonumber \\
&& \hspace*{-1.2cm} \stackrel{(\text{a})}{=} \frac{J(\theta)}{8}
\lim_{\theta' \rightarrow \theta} \frac{\sum_{x \in \mathcal{X}}
P_\theta(x) \left( \ln \frac{P_\theta(x)}{P_{\theta'}(x)} -
D(P_\theta || P_{\theta'}) \right)^2}{\biggl( \max_{x \in
\mathcal{X}} \left|  \ln \frac{P_\theta(x)}{P_{\theta'}(x)} -
D(P_\theta || P_{\theta'}) \right| \biggr)^2}. \label{eq: second
calculated limit}
\end{eqnarray}
It is clear that the second term on the right-hand side of
\eqref{eq: second calculated limit} is bounded between zero and
one (if the limit exists). This limit can be made arbitrarily
small, i.e., there exists an indexed family of probability mass
functions $\{P_{\theta}\}_{\theta \in \Theta}$ for which the
second term on the right-hand side of \eqref{eq: second calculated
limit} can be made arbitrarily close to zero. For a concrete
example, let $\alpha \in (0,1)$ be fixed, and $\theta \in
\reals^+$ be a parameter that defines the following indexed family
of probability mass functions over the ternary alphabet
$\mathcal{X} = \{0, 1, 2\}$: $$ P_{\theta}(0) =
\frac{\theta(1-\alpha)}{1+\theta}, \quad P_{\theta}(1) = \alpha,
\quad P_{\theta}(2) = \frac{1-\alpha}{1+\theta}. $$ Then, it
follows by calculus that for this indexed family
\begin{equation*}
\lim_{\theta' \rightarrow \theta} \frac{\sum_{x \in \mathcal{X}}
P_\theta(x) \left( \ln \frac{P_\theta(x)}{P_{\theta'}(x)} -
D(P_\theta || P_{\theta'}) \right)^2}{\biggl( \max_{x \in
\mathcal{X}} \left|  \ln \frac{P_\theta(x)}{P_{\theta'}(x)} -
D(P_\theta || P_{\theta'}) \right| \biggr)^2} = (1-\alpha) \theta
\end{equation*}
so, for any $\theta \in \reals^{+}$, the above limit can be made
arbitrarily close to zero by choosing $\alpha$ close enough to~1.
This completes the proof of \eqref{eq: relation between the
loosened lower bound on the error exponent and Fisher
information}, and also the proof of Proposition~\ref{proposition:
Fisher information}.


\begin{thebibliography}{99}
\bibitem{AlonS_tpm3}
N. Alon and J. H. Spencer, {\em The Probabilistic Method}, Wiley
Series in Discrete Mathematics and Optimization, Third Edition,
2008.
\bibitem{Azuma}
K. Azuma, ``Weighted sums of certain dependent random variables,''
{\em Tohoku Mathematical Journal}, vol.~19, pp.~357--367, 1967.
\bibitem{Billingsley}
P. Billingsley, {\em Probability and Measure}, Wiley Series in
Probability and Mathematical Statistics, Third Edition, 1995.
\bibitem{Blahut_IT74}
R. E. Blahut, ``Hypothesis testing and information theory,'' {\em
IEEE Trans. on Information Theory}, vol.~20, no.~4, pp.~405--417,
July 1974.
\bibitem{survey2006}
F. Chung and L. Lu, ``Concentration inequalities and martingale
inequalities: a survey,'' {\em Internet Mathematics}, vol.~3,
no.~1, pp.~79--127, March 2006. [Online]. Available:
\url{http://www.ucsd.edu/~fan/wp/concen.pdf}.
\bibitem{Cover and Thomas}
T. M. Cover and J. A. Thomas, {\em Elements of Information
Theory}, John Wiley and Sons, second edition, 2006.
\bibitem{Csiszar_Shields_FnT}
I. Csisz\'{a}r and P. C. Shields, {\em Information Theory and
Statistics: A Tutorial}, Foundations and Trends in Communications
and Information Theory, vol.~1, no.~4, pp.~417--528, 2004.
\bibitem{Dembo_Zeitouni}
A. Dembo and O. Zeitouni, {\em Large Devitations Techniques and
Applications}, Springer, second edition, 1997.
\bibitem{Hoeffding}
W. Hoeffding, ``Probability inequalities for sums of bounded
random variables,'' {\em Journal of the American Statistical
Association}, vol.~58, no.~301, pp.~13--30, March 1963.
\bibitem{Hollander_book_2000}
F. den Hollander, {\em Large Deviations}, Fields Institute
Monographs, American Mathematical Society, 2000.
\bibitem{McDiarmid_tutorial}
C. McDiarmid, ``Concentration,'' {\em Probabilistic Methods for
Algorithmic Discrete Mathematics}, pp.~195--248, Springer, 1998.
\bibitem{Sason_submitted_paper}
I. Sason, ``On refined versions of the Azuma-Hoeffding inequality
with applications in information theory,'' last updated in July 2012.
[Online]. Available: \url{http://arxiv.org/pdf/1111.1977v5.pdf}.
\end{thebibliography}
\end{document}